
\documentclass[letterpaper, 10 pt, conference]{ieeeconf}  

\IEEEoverridecommandlockouts                              

\overrideIEEEmargins                                      


\usepackage{amsmath} 
\usepackage{amssymb}  
\usepackage{cite}

\hyphenation{op-tical net-works semi-conduc-tor}
\usepackage{tabulary}
\usepackage[english]{babel}
\usepackage{blindtext}
\usepackage[makeroom]{cancel}
\usepackage{amsfonts}
\usepackage{amsthm}  
\usepackage{amssymb}
\usepackage[pdftex]{graphicx}
\graphicspath{{../pdf/}{../jpeg/}}
\DeclareGraphicsExtensions{.pdf,.jpeg,.png}
\usepackage{pgf,tikz}
\usepackage{tabularx}
\usepackage{float}

\usepackage{enumitem} 
\usepackage{bbm} 

\usepackage[strict]{changepage}

\newcommand{\vect}[1]{\boldsymbol{#1}}

\newcommand{\wt}[1]{\widetilde{#1}}

\newtheorem{remark}{Remark}
\newtheorem{thm}{Theorem}
\newtheorem{lemma}{Lemma}

\title{\LARGE \bf
Nonlinear Mapping Convergence and Application to Social Networks
}

\author{  Brian D.O. Anderson$^{1,2}$, Mengbin Ye$^{1}$,
%
\thanks{$^1$ B.D.O. Anderson and M. Ye are with the Research School of Engineering, Australian National University
\texttt{\{Mengbin.Ye, Brian.Anderson \}@anu.edu.au}. $^2$B.D.O. Anderson is also with Hangzhou Dianzi University, Hangzhou, China 
and with Data61-CSIRO (formerly NICTA Ltd.) in Canberra, A.C.T., Australia. 
}}

\begin{document}

\maketitle
\thispagestyle{empty}
\pagestyle{empty}

\begin{abstract}
This paper discusses nonlinear discrete-time maps of the form
$x(k+1)=F(x(k))$, focussing on equilibrium points of such maps. Under some circumstances, Lefschetz fixed-point theory can be used to establish the existence of a single locally attractive equilibrium (which is sometimes globally attractive) when a general property of local attractivity is known for any equilibrium. Problems in social networks often involve such discrete-time systems, and we make an application to one such problem.  
\end{abstract}

\section{Introduction}\label{sec:intro}
%
%
%
%

Recursive equations of the form \begin{equation}\label{eq:nonlinearupdate}
x(k+1)=F(x(k)),
\end{equation}
are fundamental to control and signal processing. Very often $F$ is linear or affine, but in this paper, $F$ is not so restricted, though we do require it to be suitably smooth. Usually also, $x(k)$ resides in a Euclidean space of known dimension, though this is not always the case, and indeed will not always be the case in this paper. 

In many situations, it is possible to examine local behavior of the nonlinear map \eqref{eq:nonlinearupdate} around an equilibrium point, through a linearization process. If $\bar x$ is an equilibrium point, i.e. a fixed point of the mapping $F$ satisfying $\bar x = F(\bar x)$, then the Jacobian $J(\bar x)=\frac{\partial F}{\partial x}|_{\bar x}$ provides guidance as to behavior in the vicinity of $\bar x$. If $||x(k)-\bar x||$ is small, then approximately
\begin{equation}\label{eq:linearizationupdate}
x(k+1)-\bar x=J(\bar x)[x(k)-\bar x]
\end{equation}
If the eigenvalues of $J(\bar x)$ do not lie on the unit circle, then the asymptotic stability or instability of the linear equation \eqref{eq:linearizationupdate} implies the same property for the nonlinear equation \eqref{eq:nonlinearupdate}, albeit locally. 

Recent work in the area of social networks \cite{jia2015opinion_SIAM} introduced what amounts to a particular version of \eqref{eq:nonlinearupdate}, and established by a rather specialized calculation, tailored to the specific algebraic form of $F$, that under normal circumstances, the equation possessed a single globally attractive equilibrium. For completeness, we note that in the application, the entries of $x(k)$ were restricted to lie in $[0,1]$ and to satisfy $\sum_jx_j(k)=1,\;\forall k$. 

It is natural to speculate whether the conclusion that there is a single attractive equilibrium is indeed intrinsic to the algebraic form of $F$, or whether rather, it is a consequence of some more general property, and consequently also one that will follow for a whole class of $F$ of which that in \cite{jia2015opinion_SIAM} is just a special case. 


Indeed we show that the conclusion is a general one, and we use Lefschetz fixed point theory to show this. The method advanced here may have application to other situations than those considered in the paper \cite{jia2015opinion_SIAM}, since they are more general in character, i.e. will apply in a much wider variety of situations than those contemplated in the paper \cite{jia2015opinion_SIAM}.

By way of brief background, Lefschetz fixed point theory (of which more details are summarized subsequently) is a tool for relating the local behavior of maps to some global properties, taking into account the underlying topological space in which the maps act. The local properties are associated with the linearized equations \eqref{eq:linearizationupdate}, potentially studied at multiple equilibrium points (and with in general a different $J(\bar x)$ associated with each equilibrium point). Such local properties were flagged in \cite{ye2017DF_journal_arxiv} as of central concern in a time-varying version of the problem studied in \cite {jia2015opinion_SIAM}.



To sum up the contribution of this paper, we provide a new result demonstrating local exponential convergence to a unique fixed point, for nonlinear maps known to have local convergence properties around its equilibria, and we indicate the applicability of the result to a problem in social network analysis where global convergence occurs.  

The paper is structured as follows. In the next section, we present background results on Lefschetz fixed point theory. While these may be standard to those familiar with differential topology, they are not so standard for control engineers. Section \ref{section:mainresult}  presents the main result of the paper, indicating circumstances under which there is a unique equilibrium point for \eqref{eq:nonlinearupdate} which is in fact locally exponentially stable. The following section illustrates the application to the update equation arising in social networks \cite{jia2015opinion_SIAM}, and section \ref{section:concludingremarks} contains concluding remarks.

\section{Background on Lefschetz Fixed Point Theory}

Lefschetz fixed-point theory applies to smooth maps $F:X\rightarrow X$ where $X$ is a compact oriented manifold \cite{guillemin2010differential}, \cite{hirsch2012differential} or a compact triangulable space \cite{armstrong2013basic}.{\color{blue}}{\footnote{The notion of orientation of a manifold is described in the references; roughly, a manifold is oriented if one can attach an infinitesimal set of coordinate axes to an arbitrary point on the manifold, and then move the point with the axes attached knowing that one can never move to reverse the orientation. A M\"obius strip is \textit{not} an oriented manifold.}} Thus $X=\mathbb R^n$ is excluded, but if $X$ is a compact subset of $\mathbb R^n$ such as a simplex, then it is allowed. This also means that if a map $F:\mathbb R^n\rightarrow\mathbb R^n$ is known to have no fixed points for large values of its argument, the theory can be applied by considering the restriction of $F$ to a compact subset of $\mathbb R^n$ such as a ball of large enough radius. 

Lefschetz fixed-point theory involves derivatives. Any smooth map has the property that at any point $x\in X$, there is a linear derivative mapping, call it $dF_x$, and if $X$ looks locally like $\mathbb R^m$,  then the derivative map can be represented by the $m\times m$ Jacobian matrix in the local coordinate basis. 

Interest is centered for our purposes on those maps which have a finite number of fixed points (including possibly zero) in $X$, though of course, maps with an infinite set of fixed points exist, for example $F(x)=x$, the identity map. A fixed point $x$ is called a \textit{Lefschetz fixed point of F} if the eigenvalues of $dF_x$ are unequal to 1. A fixed point being a Lefschetz fixed point is sufficient but not necessary to ensure that $x$ is an isolated fixed point of $F$, i.e. there is an open neighborhood around $x$ in which no other fixed point occurs. Because $X$ is compact, and if it is known that all fixed points of $F$ are isolated, say because they are all Lefschetz fixed points, it easily follows that  the number of fixed points is necessarily finite. For completeness, we record an argument by contradiction, which seems standard. If there were an infinite number of fixed points, $x_i, i=1,2,\dots$, compactness of $X$ implies there is a convergent subsequence $x_{i1}, x_{i2},\dots$, with limit point $\bar x$, and again by compactness $\bar x\in X$. Now $F$ is continuous so $F(x_{ij})\rightarrow F(\bar x)$ since $x_{ij}\rightarrow \bar x$ as $j\rightarrow\infty$. Then $x_{ij}-F(x_{ij})
\rightarrow \bar x-F(\bar x)$ as $j\rightarrow\infty$. Since $x_{ij}-F(x_{ij})=0\;\forall j$, it is evident that $\bar x$ is a fixed point of $F$. However, being a limit point it is not isolated, hence the contradiction.

The Lefschetz property holding at a particular fixed point $x$ also implies that at the point $x$, the (linear) mapping $I-dF_x$ is an isomorphism of the tangent space $T_x(X)$ at $x$. If it preserves orientation, then its determinant is positive, while if it reverses orientation, its determinant is negative. The \textit{local Lefschetz number} of $F$ at a fixed point $x$, written $L_x(F)$, is defined as +1 or -1 according as the determinant of $I-dF_x$ is positive or negative.\footnote {Reference \cite{guillemin2010differential} uses $dF_x-I$ rather than $I-dF_x$, which is used by \cite{hirsch2012differential}. We require the latter form.}

The map $F$ is termed a \textit{Lefschetz map} if and only if all its fixed points are Lefschetz fixed points (and there are then, as noted above, a finite number of fixed points). The \textit{Lefschetz number of $F$}, written $L(F)$,  is defined as

\begin{equation}\label{eq:Lefschetzsum}
L(F)=\sum_{F(x)=x}L_x(F)
\end{equation}

There is an alternative definition of the Lefschetz number not provided here which can be shown to be equivalent to that appearing here, based on topological considerations, and provided in \cite{hirsch2012differential, guillemin2010differential}. It is \textit{not} restricted to maps with a finite number of fixed points. Moreover, using this alternative definition, one sees that $L(F)$ is a \textit{homotopy invariant}, \footnote{Smooth maps $F:X\rightarrow X$ and $G: X\rightarrow X$ are said to be homotopic if there exists a smooth map $H:X\times I\rightarrow X\times I$ with $H(x,0)=F(x), H(x,1)=G(x)$. Saying $L(F)$ is a homotopy invariant means $L(F)=L(G)$ for any $G$ which is homotopic to $F$.} and this particular property does not require limitation to those maps with a finite number of fixed points. Further, the alternative approach yields a connection between the Lefschetz number of the identity map (which has an infinite number of fixed points) and another topological invariant, of the underlying space $X$, viz the Euler characteristic{\footnote{The Euler characteristic is an integer number associated with a topological space, including a space that in some sense is a limit of a sequence of multidimensional polyhedra, e.g. a sphere, and a key property is that distortion or bending of the space leaves the number invariant. Euler characteristics are known for a great many topological spaces.}}, \cite{hirsch2012differential, guillemin2010differential,matsumoto2002introduction}.

The key result (see e.g. \cite{hirsch2012differential} for the case of a compact oriented manifold and \cite{armstrong2013basic} for the case of a compact triangulable space) is as follows:
\begin{thm}\label{thm:Lefschetz}
The Lefschetz number of the identity map $\mathcal I_d: X\rightarrow X$ where $X$ is a compact oriented manifold or a compact triangulable space is $\chi(X)$, the Euler characteristic of $X$. 
\end{thm}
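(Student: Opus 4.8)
The plan is to work with the alternative, homological definition of the Lefschetz number rather than the geometric sum in \eqref{eq:Lefschetzsum}, precisely because the identity map has infinitely many fixed points and \eqref{eq:Lefschetzsum} therefore does not apply. Recall that this alternative definition, valid for every smooth self-map of $X$, reads $L(F)=\sum_{i\ge 0}(-1)^i\operatorname{tr}(F_{*i})$, where $F_{*i}:H_i(X;\mathbb{Q})\rightarrow H_i(X;\mathbb{Q})$ is the linear map induced by $F$ on the $i$-th rational homology group. Since $X$ is a compact oriented manifold or a compact triangulable space, each $H_i(X;\mathbb{Q})$ is finite dimensional and only finitely many are nonzero, so the sum is finite and well defined.

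Next I would specialize to $F=\mathcal I_d$. The identity map induces the identity linear map on each homology group, i.e.\ $(\mathcal I_d)_{*i}=\operatorname{id}_{H_i(X;\mathbb{Q})}$. The trace of the identity on a finite-dimensional vector space is just its dimension, so $\operatorname{tr}\big((\mathcal I_d)_{*i}\big)=\dim_{\mathbb{Q}}H_i(X;\mathbb{Q})=b_i$, the $i$-th Betti number. Substituting gives at once
\begin{equation*}
L(\mathcal I_d)=\sum_{i\ge 0}(-1)^i b_i.
\end{equation*}

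The remaining step, which I expect to carry the real content, is the Euler--Poincar\'e relation $\sum_{i\ge 0}(-1)^i b_i=\chi(X)$, where $\chi(X)$ is the Euler characteristic defined combinatorially as the alternating sum $\sum_i(-1)^i c_i$ of the numbers $c_i$ of $i$-cells (or $i$-simplices) of a CW or triangulation structure on $X$. To establish it I would apply rank--nullity to the chain complex: writing $\partial_i:C_i\rightarrow C_{i-1}$ for the boundary maps, one has $b_i=\dim C_i-\operatorname{rank}\partial_i-\operatorname{rank}\partial_{i+1}$, and forming the alternating sum makes each rank term appear twice with opposite signs, so they telescope away and leave $\sum_i(-1)^i b_i=\sum_i(-1)^i\dim C_i=\sum_i(-1)^i c_i=\chi(X)$. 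Compactness and triangulability are exactly what guarantee finitely many cells, hence finiteness of every quantity above.

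The main obstacle here is conceptual rather than computational. It lies in the two background facts I am invoking: that the homological definition of $L(F)$ may legitimately replace the geometric one of \eqref{eq:Lefschetzsum}, and that $\chi(X)$ is a genuine topological invariant independent of the chosen triangulation. Both are standard but nontrivial results of algebraic and differential topology, drawn from \cite{hirsch2012differential, guillemin2010differential, armstrong2013basic}; once they are granted, the computation reducing $L(\mathcal I_d)$ to $\chi(X)$ is short and purely linear-algebraic.
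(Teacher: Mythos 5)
Your proposal is correct and takes essentially the same route as the paper: the paper gives no proof of Theorem~\ref{thm:Lefschetz} itself, citing \cite{hirsch2012differential} and \cite{armstrong2013basic}, and the argument it gestures at in the preceding paragraph --- using the alternative homological definition of $L(F)$, which unlike the geometric sum \eqref{eq:Lefschetzsum} applies to the identity map despite its infinitely many fixed points --- is exactly the one you carry out. Your computation (the identity induces the identity on each $H_i(X;\mathbb{Q})$, the traces are the Betti numbers, and the rank--nullity telescoping yields the Euler--Poincar\'e identity $\sum_i(-1)^i b_i=\chi(X)$) is the standard proof from the cited references, and you correctly flag as background the two nontrivial inputs: the equivalence of the homological and geometric definitions, and the triangulation-independence (together with the existence of a finite triangulation or CW structure, which for the compact smooth manifold case is itself a standard but nontrivial fact) of $\chi(X)$.
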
 
A key consequence of this theorem is that if a map $F$ is homotopically equivalent to $\mathcal I_d$, i.e. if there exists a smooth map $H:X\times I\rightarrow X$ such that $\hat F(x,0)=F(x)$ and $H(x,1)=\mathcal I_d(x)=x$  then 
\begin{equation}
L(F)=\chi(X)
\end{equation}
Hence we have the following theorem:
\begin{thm}\label{thm:Lefschetz2}
Let $X$ be a compact oriented manifold or a compact triangulable space, and suppose $F:X\rightarrow X$ is a Lefschetz map, i.e. there is a finite number of fixed points at each of which $I-dF_x$ is an isomorphism, and is homotopically equivalent to the identity map. Then there holds
\begin{equation}
L(F)=\sum_{F(x)=x}L_x(F)=\chi(X)
\end{equation}
where $L_x(F)$ is $+1$ or $-1$ according as $det(I-dF_x)$ has positive or negative sign, and $\chi(X)$ is the Euler characteristic of $X$.
\end{thm}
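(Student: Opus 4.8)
The plan is to establish the chain of equalities
\[
\sum_{F(x)=x} L_x(F) \;=\; L(F) \;=\; L(\mathcal{I}_d) \;=\; \chi(X),
\]
reading the two inner equalities from the definitions and invariance properties summarized above, and the final equality from Theorem~\ref{thm:Lefschetz}. The only genuinely substantive ingredient, the identification of the combinatorial sum over fixed points with the homotopy-invariant topological quantity, is the content of the Lefschetz fixed-point theorem itself, which I would invoke rather than reprove.

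First I would verify that the combinatorial definition actually applies to $F$. By hypothesis $F$ is a Lefschetz map: it has finitely many fixed points, and at each such point $x$ the map $I-dF_x$ is an isomorphism, so that the local Lefschetz number $L_x(F)\in\{+1,-1\}$ is well defined according to the sign of $\det(I-dF_x)$. Hence the definition \eqref{eq:Lefschetzsum}, $L(F)=\sum_{F(x)=x}L_x(F)$, is meaningful, and by the equivalence recorded in the background section (see \cite{hirsch2012differential,guillemin2010differential,armstrong2013basic}) it agrees with the topological, homotopy-invariant definition of $L(F)$.

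Next I would exploit homotopy invariance. Since $F$ is assumed homotopically equivalent to the identity map $\mathcal{I}_d$, and since the topological Lefschetz number is a homotopy invariant, it follows that $L(F)=L(\mathcal{I}_d)$. Finally, Theorem~\ref{thm:Lefschetz} identifies the Lefschetz number of the identity map on a compact oriented manifold or compact triangulable space with the Euler characteristic, giving $L(\mathcal{I}_d)=\chi(X)$. Assembling these yields $\sum_{F(x)=x}L_x(F)=\chi(X)$, as claimed.

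The point deserving care, and the main conceptual obstacle, is that homotopy invariance and the identification with $\chi(X)$ must be read from the \emph{topological} definition of $L(F)$, not the combinatorial one: the identity map is not itself a Lefschetz map in the present sense, since every point is fixed and $I-d(\mathcal{I}_d)_x=0$ is never an isomorphism, so the sum \eqref{eq:Lefschetzsum} cannot be applied to $\mathcal{I}_d$ directly. It is precisely the equivalence of the two definitions for the finite-fixed-point map $F$ that bridges the combinatorial expression one can actually evaluate and the topological invariants that behave well under homotopy. I would also confirm that the sign convention $I-dF_x$ (rather than $dF_x-I$) is used consistently, as flagged in the footnote, so that the local Lefschetz numbers carry signs compatible with Theorem~\ref{thm:Lefschetz}.
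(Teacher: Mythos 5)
Your proposal is correct and follows essentially the same route as the paper, which obtains Theorem~\ref{thm:Lefschetz2} precisely by combining the equivalence of the combinatorial sum \eqref{eq:Lefschetzsum} with the homotopy-invariant topological definition of $L(F)$, the hypothesis that $F$ is homotopic to $\mathcal{I}_d$, and Theorem~\ref{thm:Lefschetz} giving $L(\mathcal{I}_d)=\chi(X)$. Your observation that $\mathcal{I}_d$ is not itself a Lefschetz map (so homotopy invariance must be read from the topological definition rather than from \eqref{eq:Lefschetzsum}) is exactly the subtlety the paper addresses by noting that the alternative definition is not restricted to maps with finitely many fixed points.
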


\section{Main Result}\label{section:mainresult}

In this section, we establish that certain properties of the mapping $F$ and the associated space $X$ guarantee that $F$ has a unique fixed point. The main result, proved using the Lefschetz theory, is as follows.

\begin{thm}\label{thm:main_lefschetz_result}
Consider a smooth map $F:X\rightarrow X$ where $X$ is a compact, oriented and convex manifold or a convex triangulable space of arbitrary dimension. Suppose that the eigenvalues of $dF_x$ have magnitude less than 1 for all fixed points of $F$. Then $F$ has a unique fixed point, and in a local neighborhood about the fixed point, \eqref{eq:nonlinearupdate} converges to the fixed point exponentially fast.
\end{thm}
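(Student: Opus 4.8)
The plan is to apply Theorem \ref{thm:Lefschetz2} and then exploit the special structure forced by the eigenvalue hypothesis. Two preliminary facts set this up. First, since $X$ is convex it is contractible, so its Euler characteristic is that of a point: $\chi(X)=1$. Second, convexity provides a homotopy from $F$ to the identity. Define $H(x,t)=(1-t)F(x)+t\,x$; because $F(x)\in X$ and $x\in X$, convexity guarantees the convex combination $H(x,t)\in X$ for every $t\in[0,1]$, and $H$ is smooth because $F$ is. As $H(\cdot,0)=F$ and $H(\cdot,1)=\mathcal I_d$, the map $F$ is homotopically equivalent to the identity. (In the abstract manifold setting one reads the convex combination through the assumed convex/linear structure; the essential point is that convexity keeps this straight-line homotopy inside $X$.)

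Next I would verify that $F$ is a Lefschetz map. The hypothesis that every eigenvalue of $dF_x$ has magnitude strictly less than $1$ at each fixed point implies in particular that no eigenvalue equals $1$, so $I-dF_x$ is an isomorphism and each fixed point is a Lefschetz fixed point. As recorded in the background, the fixed points are then isolated and, by compactness of $X$, finite in number. Theorem \ref{thm:Lefschetz2} therefore applies and yields
\begin{equation}
\sum_{F(x)=x} L_x(F) = \chi(X) = 1 .
\end{equation}

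The crux is a sign computation showing $L_x(F)=+1$ at \emph{every} fixed point. Writing $\lambda_1,\dots,\lambda_m$ for the eigenvalues of the real matrix $dF_x$, we have $\det(I-dF_x)=\prod_{i=1}^m(1-\lambda_i)$. A real eigenvalue satisfies $-1<\lambda_i<1$, so its factor $1-\lambda_i$ is positive; complex eigenvalues occur in conjugate pairs whose combined contribution is $(1-\lambda)(1-\bar\lambda)=|1-\lambda|^2>0$. Hence $\det(I-dF_x)>0$ and $L_x(F)=\sgn\det(I-dF_x)=+1$. Substituting into the Lefschetz sum forces the number of fixed points to equal $1$, establishing existence and uniqueness simultaneously (existence because a count of $0$ would contradict $\chi(X)=1$).

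Finally, for local exponential convergence I would linearize at the unique fixed point $\bar x$. Since the spectral radius $\rho(dF_{\bar x})<1$, choose $\epsilon>0$ with $\rho(dF_{\bar x})+\epsilon<1$ together with a vector norm in which $\|dF_{\bar x}\|\le\rho(dF_{\bar x})+\epsilon$; smoothness gives $F(x)-\bar x = dF_{\bar x}(x-\bar x)+o(\|x-\bar x\|)$, so on a small enough neighborhood $\|F(x)-\bar x\|\le c\,\|x-\bar x\|$ with $c<1$, producing a local contraction and hence exponential convergence of \eqref{eq:nonlinearupdate}. I expect the main obstacle to lie in the homotopy step in the genuinely manifold setting — making ``convex manifold'' precise enough that the straight-line homotopy is well defined and stays in $X$; the sign computation and the contraction estimate are routine, and the conceptual content is simply that the magnitude-less-than-one hypothesis makes every local Lefschetz number $+1$, so the invariant $\chi(X)=1$ counts the fixed points exactly.
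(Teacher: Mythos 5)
Your proposal is correct and follows essentially the same route as the paper: contractibility of the convex compact $X$ gives $\chi(X)=1$, the straight-line homotopy $t\,x+(1-t)F(x)$ (valid by convexity) makes $F$ homotopic to the identity, the eigenvalue hypothesis forces $\det(I-dF_x)>0$ and hence $L_x(F)=+1$ at every fixed point, and the Lefschetz sum then counts exactly one fixed point, with local exponential convergence following from the spectral radius condition at that point. Your write-up is, if anything, slightly more explicit than the paper's (verifying the Lefschetz-map property, pairing complex conjugate eigenvalues in the sign computation, and constructing the adapted norm for the local contraction), but these are refinements of the same argument rather than a different approach.
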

\begin{proof}
Observe first that the compactness and convexity  properties of $X$ guarantee it is homotopy equivalent to the unit $m$-dimensional disk $D^m$ and accordingly then homotopy equivalent to a single point. This means that $\chi(X)=1$, see e.g.\cite{matsumoto2002introduction}, see p. 140. 

Next, observe that, because $X$ is convex, $H =t\mathcal I_d+(1-t)F$ which maps $x$ to $tx+(1-t)F(x)$, is a mapping from $X$ to $X$ for every $t\in[0,1]$ and the smoothness properties of $H$ (which come from the smoothness of $F$ and the specific dependence on $t$) then guarantee that $F$ and $\mathcal I_d$ are homotopically equivalent. By Theorem \ref{thm:Lefschetz2}, there holds
\begin{equation}\label{eq:LefschetzEuler}
L(F)=1
\end{equation}
Now for any real matrix $A$ for which the eigenvalues are less than one in magnitude, it is easily seen that the matrix $I-A$ has eigenvalues all with positive real part, from which it follows that the determinant of $I-A$ is positive, since the determinant is equal to the product of the eigenvalues. Hence for any fixed point $x$ of $F$,  we see by identifying $A$ with $dF_x$ that there necessarily holds $L_x(F)=1$, 
By \eqref{eq:Lefschetzsum} and \eqref{eq:LefschetzEuler}, it follows that 
$$1=\sum_{F(x)=x}1$$
or that there is precisely one fixed point. 

Convergence of \eqref{eq:nonlinearupdate} to the unique fixed point from any initial value in its region of attraction is necessarily exponentially fast. In a neighborhood $\mathcal{D}$ around the unique fixed point, the eigenvalue property of $dF_x$ guarantees exponential convergence. The region of attraction for the fixed point is in most instances larger than $\mathcal{D}$, and we denote as $\mathcal{U} \subset X$ an arbitrary \emph{compact} space within the region of attraction and containing $\mathcal{D}$. For any initial $x\in \mathcal{U}$, the sequence $x, F(x), F(F(x)),\dots$  converges to the neighborhood $\mathcal{D}$ in a finite number of steps, and because the set $U$ is compact, there is a number of steps, $\bar N < \infty$ say, such that from all initial conditions in $\mathcal{U}$, the neighborhood is reached in no more than $\bar N$ steps. The finiteness of $\bar N$ then implies that exponentially fast convergence occurs for all initial conditions $x\in \mathcal{U}$. 
\end{proof}

\begin{remark}
We stress that a key feature of our result is that we need only evaluate the Jacobian $dF_x$ at the fixed points of $F$. In contrast,  recall that a standard method to prove that $F : X \mapsto X$ has a unique fixed point $\bar x$ and that \eqref{eq:nonlinearupdate} converges exponentially fast to $\bar x$ is via Banach's Fixed Point Theorem (assuming $X$ compact). Specifically, one sufficient condition for $F$ to be a contractive map would be to prove that $|| dF_x || < \alpha,\;\forall\, x \in X$, where $\alpha < 1$, with a further assumption that $X$ be convex \cite{khamsi2011metric_space_book}. Thus global properties rather than local (at fixed point) properties are required to generate the conclusion. The difficulty is acute for us. 
A nonlinear $F$ results in $dF_x$ being state-dependent. Consider two consecutive points of the trajectory of \eqref{eq:nonlinearupdate} that are not a fixed point, which we denote $x_1 = x(k)$ and $x_2 = x(k+1)$, and suppose that $dF_x|_{x_1}$ and $dF_x|_{x_2}$ both have eigenvalues with magnitude less than 1, i.e. assume that the eigenvalue restriction applies other than just at the fixed points. Then according to \cite[Lemma 5.6.10]{horn2012matrixbook}, there exists norms $\Vert \cdot \Vert^\prime$ and $\Vert \cdots \Vert^{\prime\prime}$ such that $\Vert dF_x|_{x_1} \Vert^\prime <1$ and $\Vert dF_x|_{x_2} \Vert^{\prime\prime} <1$. However, it cannot be guaranteed that there exists a single norm $\Vert \cdot \Vert^{\prime \prime \prime}$ such that $\Vert dF_x|_{x_1} \Vert^{\prime \prime \prime}, \Vert dF_x|_{x_2} \Vert^{\prime \prime \prime} <1$. In this paper, we need not consider norms; we need not consider eigenvalue properties at all points; we only need to consider the eigenvalues of $dF_x$ at fixed points $\bar x = F(\bar x)$ to simultaneously obtain a unique fixed point conclusion and local exponential convergence.
\end{remark}


\begin{remark} The proof of the theorem using Lefschetz ideas will clearly generalize in the following way. Suppose that $F$ is homotopic to the identity and $X$ is not homotopic to the unit ball, while all fixed points are Lefschetz with the property that $I-dF_x$ has positive determinant. Then the number of fixed points will be $\chi(X)$. If for example $F$ mapped $S^2$ to $S^2$ and never mapped a point to its antipodal point, i.e. there was no $x$ for which $F(x)=-x$, it will be homotopic to the identity map and then there will be two fixed points, since $\chi(S^2)=2$. To construct the homotopy, observe that, because of the exclusion that $F$ can map any point to an antipodal point, there is a well-defined homotopy provided by $$H(x,t)=\frac{(1-t)x+tF(x)}{||(1-t)x+tF(x)||}$$
\end{remark}

\section{Application to a Social Network Problem}\label{sec:social_network}
We now apply the above results to a recent problem in social networks, which studied the evolution of individual self-confidence, $x_i(k)$, as a social network of $n$ individuals discusses a sequence of issues, $k = 0, 1, 2, \hdots$. For simplicity, we consider $n\geq 3$ individuals. We provide a brief introduction to the problem here, including the techniques used to study stability, and refer the reader to \cite{jia2015opinion_SIAM,ye2017DF_journal_arxiv} for details. The map $F$ in question is given as
\begin{align}\label{eq:map_F_DF}
F(  x(k) ) =   \begin{cases} 
   e_i & \hspace*{-6pt} \text{if }  x(k) = e_i \; \text{for any } i \\ \\
   \alpha (x(k)) \begin{bmatrix} \frac{\gamma_1}{1-x_1(k)} \\ \vdots \\ \frac{\gamma_n}{1-x_n(k)} \end{bmatrix}       & \text{otherwise }
  \end{cases}
\end{align}
with $\alpha(x(k)) = 1/\sum_{i=1}^n \frac{\gamma_i}{1- x_i(k)}$ where the vector $\gamma = [\gamma_1, \gamma_2, \hdots, \gamma_n]^\top$ is constant, has strictly positive entries $\gamma_i$ and satisfies $\gamma^\top 1_n = 1$. It can be verified that $F : \Delta_n \mapsto \Delta_n$ where $\Delta_n = \{x_i : \sum_i^n x_i = 1, x_i \geq 0\}$ is the $n$-dimensional unit simplex. Thus, $\Delta_n$ satisfies all the requirements on compactness, orientability, and convexity. Moreover, $F$ is smooth everywhere on $\Delta_n$, including at the corners $x_i=1$, even given  the $1/(1-x_i)$ term in the $i^{th}$ entry of $F$. In \cite{jia2015opinion_SIAM} it is proved that $F$ is continuous using a complex calculation to obtain the Lipschitz constant at the corners of the simplex, but  smoothness is not shown.  As later shown, it follows easily however that $F$ is in fact of class $\mathcal{C}^\infty$ in $\Delta_n$. 

 Let us also make the important point that the above definition \eqref{eq:map_F_DF} of $F$ can be regarded as defining a map $\mathbb R^n\rightarrow\mathbb R^n$, \textit{or} as defining a map on an $(n-1)$-dimensional triangulable space $\Delta_n\rightarrow\Delta_n$, with the $n$-dimensional vector $x=[x_1,x_2,\dots,x_n]^{\top}$ providing a convenient parametrization of the space given imposition of the constraints $\sum_{i=1}^n x_i=1, x_i\geq 0$. 

\begin{remark}It was proved in \cite{jia2015opinion_SIAM} that, in the context of the social network problem, $\gamma_i \leq 1/2$. Since $\gamma_i > 0$ and $n \geq 3$, if $\exists i: \gamma_i = 1/2$ then $\gamma_j < 1/2$ for all $j\neq i$. It was also proved that $\gamma_i = 1/2$ if and only if the graph $\mathcal{G}$, describing the relative interpersonal relationships between the individuals, is a strongly connected ``star graph'' with center node $v_1$. In this paper, we will not consider the special case of the strongly connected star graph.
\end{remark}

\subsection{Existing Results}
In the paper \cite{jia2015opinion_SIAM}, which first proposed the dynamical system \eqref{eq:nonlinearupdate} with map $F$ given in \eqref{eq:map_F_DF}, the following analysis was provided. Firstly, because $F$ is continuous and $\Delta_n$ is convex and compact, Brouwer's Fixed Point Theorem is used to conclude there exists at least one interior fixed point. Next, the authors used a series of inequality calculations, exploiting the algebraic form of $F$, to show that the fixed point  $\bar x$ is unique, and importantly, that $\bar x$ is in the interior of $\Delta_n$. Following this, the authors showed that the trajectories of $x(k)$ had specific properties, again by exploiting the algebraic form of $F$. Lastly, a Lyapunov function is proposed and the properties of the trajectories of $x(k)$ are used to show the Lyapunov function is nonincreasing; LaSalle's Invariance principle is used to conclude asymptotic convergence to the unique interior fixed point $\bar x$ for all initial conditions $x(0)$ that are not a corner of the simplex $\Delta_n$.

The paper \cite{ye2017DF_journal_arxiv} takes a different approach, and looks at the Jacobian of $F$ both as a map $\mathbb R^n\rightarrow \mathbb R^n$ and its restriction (after choice of an appropriate coordinate basis for $\Delta_n$) as a map $\Delta_n\rightarrow\Delta_n$. (Note that in any fixed coordinate basis, the second Jacobian is of dimension $(n-1)\times(n-1)$, with the two Jacobians necessarily related, as described further below. It is this second Jacobian which represents the mapping $dF_x$ defined in earlier sections.)   However, rather than using the results in this paper, \cite{ye2017DF_journal_arxiv} uses nonlinear contraction analysis. A differential transformation is involved, and the transformation exploited the algebraic form of $F$ (and specifically the form of the relevant Jacobian). The $1$-norm of the transformed Jacobian is shown to be less than one, and thus exponential convergence to a unique fixed point is ensured, for all $x(0) \in \wt{\Delta}_n$.

\subsection{Proof of a Unique Fixed Point Which Is Locally Exponentially Stable}

Before we provide the result establishing there is a single fixed point, and further that it is locally exponentially stable,  we compute the Jacobian of $F:\mathbb R^n\rightarrow \mathbb R^n$ and then the related Jacobian of $F:\Delta_n\rightarrow\Delta_n$ in a coordinate basis we define, and establish some properties of the two Jacobians. For convenience, and when there is no risk of confusion, we drop the argument $k$ from $x(k)$ and $x$ from $\alpha(x(k))$. It is straightforward to obtain that
\begin{align}
\frac{\partial F_i}{\partial x_i} & = \frac{\gamma_i \alpha }{(1 - x_i)^2} - \frac{\gamma_i^2 \alpha^2 }{(1-x_i)^3} \nonumber \\
& = F_i \frac{1 - F_i}{1 - x_i} \label{eq:J_diag}
\end{align}
Similarly, we obtain, for $j \neq i$, 
\begin{align}
\frac{\partial F_i}{\partial x_j} & = - \frac{\gamma_i \gamma_j \alpha^2 }{(1-x_i)(1 - x_j)^2} \nonumber \\
& = - \frac{F_i F_j }{1 - x_j}\label{eq:J_offdiag}
\end{align}

We now show as a preliminary calculation that the corners of the simplex $\Delta_n$ are unstable equilibria for all social networks that are not star graphs, and the argument simultaneously allows us to show that $F$ is of class $\mathcal{C}^\infty$.
\begin{lemma}
Suppose that $0< \gamma_i < 1/2$, i.e. $\mathcal{G}$ is strongly connected but is not a star graph. Then $x = e_i$, where $e_i$ is the $i^{th}$ canonical unit vector, is an unstable equilibrium of \eqref{eq:nonlinearupdate} with map $F$ given in \eqref{eq:map_F_DF}.
\end{lemma}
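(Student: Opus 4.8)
The plan is to prove instability by linearization: I would compute the Jacobian $dF_x$ at the corner $x=e_i$ and exhibit an eigenvalue of magnitude strictly greater than one, which by the discussion around \eqref{eq:linearizationupdate} renders $e_i$ unstable. The immediate obstacle is that \eqref{eq:map_F_DF} contains the factor $1/(1-x_i)$, which blows up precisely as $x\to e_i$, so the first task is to recast $F$ in a form that is manifestly regular at the corner. Multiplying numerator and denominator of the $l$-th component by $\prod_m(1-x_m)$ rewrites each entry as a ratio of polynomials,
$$F_l(x)=\frac{\gamma_l\prod_{m\neq l}(1-x_m)}{\sum_{p}\gamma_p\prod_{m\neq p}(1-x_m)}.$$
Evaluating the denominator at $e_i$ kills every term except $p=i$ (each other product contains the vanishing factor $1-x_i$), leaving the value $\gamma_i>0$. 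Since the denominator is therefore nonzero in a neighbourhood of $e_i$, each $F_l$ is a rational function with nonvanishing denominator and hence of class $\mathcal{C}^\infty$ there; this simultaneously discharges the smoothness claim flagged in the preceding text.

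With a regular representation in hand, I would differentiate this polynomial ratio and evaluate at $e_i$, rather than take limits in the singular expressions \eqref{eq:J_diag}--\eqref{eq:J_offdiag}. A short computation should reveal a very special structure for the full $n\times n$ Jacobian $J(e_i)$ of $F:\mathbb{R}^n\to\mathbb{R}^n$: every partial $\partial F_l/\partial x_k$ with $k\neq i$ vanishes, so the only nonzero column is column $i$, with entries $\partial F_i/\partial x_i=(1-\gamma_i)/\gamma_i$ and $\partial F_l/\partial x_i=-\gamma_l/\gamma_i$ for $l\neq i$. Thus $J(e_i)=v\,e_i^{\top}$ is rank one, where $v$ denotes this column.

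The eigenvalues of a rank-one matrix $v\,e_i^{\top}$ are $e_i^{\top}v$ together with $0$ repeated $n-1$ times, so the single nonzero eigenvalue is $v_i=(1-\gamma_i)/\gamma_i$, with eigenvector $v$. The hypothesis $\gamma_i<1/2$ forces $1-\gamma_i>\gamma_i$, whence $(1-\gamma_i)/\gamma_i>1$. It remains to transfer this to the Jacobian $dF_{e_i}$ of the restricted map $\Delta_n\to\Delta_n$ acting on the tangent space $T_{e_i}\Delta_n=\{w:\sum_l w_l=0\}$. Here I would verify that $\sum_l v_l=(1-\gamma_i)/\gamma_i-\sum_{l\neq i}\gamma_l/\gamma_i=0$, so that $v\in T_{e_i}\Delta_n$; since $J(e_i)w=w_i v$ maps every vector into $\mathrm{span}(v)\subset T_{e_i}\Delta_n$, the restriction to the tangent space is well defined and keeps the eigenvalue $(1-\gamma_i)/\gamma_i$ (the remaining directions in $T_{e_i}\Delta_n$ map to zero). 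An eigenvalue exceeding one in magnitude makes \eqref{eq:linearizationupdate} unstable, so $e_i$ is an unstable equilibrium of \eqref{eq:nonlinearupdate}, as claimed.

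I expect the principal obstacle to be the Jacobian evaluation at the corner itself: the raw formulas \eqref{eq:J_diag}--\eqref{eq:J_offdiag} are of indeterminate form $0/0$ (or outright singular) at $e_i$, so some care is needed to extract the correct finite values. Passing to the polynomial representation sidesteps this cleanly, but one must track which products survive once $1-x_i$ is set to zero and, as a consistency check, confirm that limits of \eqref{eq:J_diag}--\eqref{eq:J_offdiag} along trajectories approaching $e_i$ reproduce the same entries $(1-\gamma_i)/\gamma_i$ and $-\gamma_l/\gamma_i$. The remaining ingredients---the rank-one eigenvalue count and the tangency identity $\sum_l v_l=0$---are routine once the column $v$ has been correctly identified.
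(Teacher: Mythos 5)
Your proposal is correct and follows essentially the same route as the paper: regularize the rational expressions so the map is manifestly smooth at the corner (the paper clears only the factor $1-x_1$, you clear $\prod_m(1-x_m)$, which works equally well), compute the rank-one Jacobian at $e_i$ with the identical entries $(1-\gamma_i)/\gamma_i$ and $-\gamma_l/\gamma_i$ in column $i$, and conclude instability from the single eigenvalue $(1-\gamma_i)/\gamma_i>1$ when $\gamma_i<1/2$. If anything, your tangency step is more careful than the paper's: the right eigenvector is the column $v$ (which you correctly verify sums to zero, hence lies in the tangent space of the simplex, so the instability survives restriction to $\Delta_n$), whereas the paper names $e_1$ as the associated eigenvector --- in fact only a left eigenvector of $v\,e_1^{\top}$ --- and argues somewhat loosely via its ``nonzero projection onto $\Delta_n$''.
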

\begin{proof}
Without loss of generality, consider $i = 1$. Observe that 
\begin{equation}
F_1(x)=\frac{\gamma_1}{\gamma_1+\sum_{i=2}^n\frac{(1-x_1)\gamma_i}{1-x_i}}
\end{equation}
and for $j\neq 1$, 
\begin{equation}
F_j(x)=\frac{\gamma_j(1-x_1)}{(1-x_j)(\gamma_1+\sum_{i=2}^n\frac{\gamma_i(1-x_1)}{1-x_k})}
\end{equation}
and it is evident that these expressions are analytic in $x_1$ for all $x\in\Delta_n$, and indeed for an open set enclosing $\Delta_n$. The same is then necessarily true of all their derivatives. Hence we conclude that $F$ is of class $\mathcal{C}^\infty$ in $\Delta_n$. 

At $x=e_1$, the expressions above yield that $F(e_1)=e_1$ and differentiating the expressions yields a value for the Jacobian at $x=e_1$ in which 
$\frac{\partial F_1}{\partial x_1} = \frac{1-\gamma_1}{\gamma_1}$, $\frac{\partial F_i}{\partial x_1} = -\frac{\gamma_i}{\gamma_1}$, $\frac{\partial F_i}{\partial x_j} = 0$ for all $i,j \neq 1$. 
It follows that the Jacobian $\frac{\partial F}{\partial x}$ associated with $F:\mathbb R^n\rightarrow\mathbb R^n$ at the point $x=e_1$ has a single eigenvalue at $(1-\gamma_1)/\gamma_1$ and all other eigenvalues are $0$.   Since $\gamma_1 < 1/2$, then $(1-\gamma_1)/\gamma_1 > 1$ and the fixed point $e_1$ is unstable. The associated eigenvector is $e_1$. This eigenvector has a nonzero projection onto $\Delta_n$ so that the instability is also an instability of the fixed point of $F:\Delta_n\rightarrow\Delta_n$. No matter what $(n-1)$-vector coordinatization we use for $\Delta_n$, the representation of $dF_x$ will be an $(n-1)\times(n-1)$ matrix with an eigenvalue greater than 1. 
\end{proof}
Since $e_i$ for all $i = 1, \hdots, n$ are unstable equilibria, we exclude them by defining an entity, distinct from $\Delta_n$, as $\wt{\Delta}_n = \{x_i : \sum_i^n x_i = 1, 0 \leq x_i \leq 1-\delta\}$, where $\delta>0$ is sufficiently small to ensure that any fixed point of $F$ in $\Delta_n$, save the unstable $e_i$, is contained in $\wt{\Delta}_n$. This ensures that $\wt{\Delta}_n$ is a compact, convex, and oriented manifold, which will allow us to use the results developed in Section~\ref{section:mainresult}. In other words, we now study the map in \eqref{eq:map_F_DF} as $F : \wt{\Delta}_n \mapsto \wt{\Delta}_n$. Now as already noted the above computed $n\times n$ Jacobian $\frac{\partial F}{\partial x}$, with elements given in \eqref{eq:J_diag} and \eqref{eq:J_offdiag}, is in fact not what we require to apply Theorem~\ref{thm:main_lefschetz_result}. This is because $\frac{\partial F}{\partial x}$ is the Jacobian computed in the coordinates of the Euclidean space  in which $\wt{\Delta}_n$ is embedded. We require the Jacobian \emph{on the manifold} $\wt{\Delta}_n$, which we will now obtain. We introduce a new coordinate basis $y\in \mathbb{R}^{n-1}$ where $y_1 = x_1, y_2 = x_2, \hdots, y_{n-1} = x_{n-1}$, and thus \emph{on the manifold $\wt{\Delta}_n$} we have $x_n = 1 - \sum_{k = 1}^{n-1} y_k$. On the manifold, and in the new coordinates, we define $G$ as the map with $G_1(y) = F_1(x), \hdots, G_{n-1}(y) = F_{n-1}(x)$, which means that $F_n = 1 - \sum_{k=1}^{n-1} G_k$. The Jacobian on the manifold of $\wt{\Delta}_n$ is in fact $dG_y$, which we now compute. For any $G_i(y_1, \hdots, y_{n-1}) = F_i(y_1, \hdots, y_{n-1}, 1 - \sum_{k=1}^{n-1} y_k)$, we have by the Chain rule that:
\begin{align}
\frac{\partial G_i}{\partial y_j} & = \sum_{k=1}^{n} \frac{\partial F_i}{\partial x_k}\frac{\partial x_k}{\partial y_j} \\
& = \frac{\partial F_i}{\partial x_j}\frac{\partial x_j}{\partial y_j} + \frac{\partial F_i}{\partial x_n}\frac{\partial x_n}{\partial y_j}
\end{align}
because $\partial x_k/\partial y_j = 0$ for $k \neq j, n$. In fact, we have from the definition of $y$, $\partial x_j/\partial y_j = 1$ and $\partial x_n/\partial y_j = - 1$. Thus, 
\begin{equation}
\frac{\partial G_i}{\partial y_j} = \frac{\partial F_i}{\partial x_j} - \frac{\partial F_i}{\partial x_n}
\end{equation}
In matrix form, it is straightforward to show that
\begin{align}\label{eq:dGy_form}
& \begin{bmatrix} \frac{\partial G_1}{\partial y_1} & \cdots & \frac{\partial G_1}{\partial y_{n-1}} \\ \vdots & \ddots & \vdots \\ \frac{\partial G_{n-1}}{\partial y_1} & \cdots & \frac{\partial G_{n-1}}{\partial y_{n-1}} \end{bmatrix}  = \nonumber \\
& \qquad \begin{bmatrix} \frac{\partial F_1}{\partial x_1} & \cdots & \frac{\partial F_1}{\partial x_n} \\ \vdots & \ddots & \vdots \\ \frac{\partial F_{n-1}}{\partial x_1} & \cdots & \frac{\partial F_{n-1}}{\partial x_n} \end{bmatrix} \begin{bmatrix} I_{n-1} \\ -\vect{1}_{n-1}^\top \end{bmatrix}
\end{align}
where $I_{n-1}$ is the $n-1$ dimensional identity matrix and $\vect{1}_{n-1}$ is the $n-1$ dimensional column vector of all ones.

Before we introduce the main result of this section, we state a linear algebra result which will be used in the proof.

\begin{lemma}[Corollary 7.6.2 in \cite{horn2012matrixbook}]\label{lem:AB_real}
Let $A, B \in \mathbb{R}^{n\times n}$ be symmetric. If $A$ is positive definite, then $AB$ is diagonalizable and has real eigenvalues. If, in addition, $B$ is positive definite or positive semidefinite, then the eigenvalues of $AB$ are all strictly positive or nonnegative, respectively.
\end{lemma}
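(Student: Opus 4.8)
The plan is to reduce the product $AB$, which is in general not symmetric, to a genuinely symmetric matrix by a similarity transformation, and then to read off all three claims from standard facts about symmetric matrices. Since $A$ is symmetric and positive definite, it admits a unique symmetric positive definite square root $A^{1/2}$, which is invertible with inverse $A^{-1/2}$. The key observation is the identity
\begin{equation}
A^{-1/2}(AB)A^{1/2} = A^{1/2} B A^{1/2},
\end{equation}
which exhibits $AB$ as similar to the matrix $S := A^{1/2} B A^{1/2}$. Everything then follows from analysing $S$.

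First I would establish diagonalizability and the realness of the eigenvalues. Because $B$ is symmetric and $A^{1/2}$ is symmetric, $S = (A^{1/2})^\top B A^{1/2}$ is symmetric, and hence orthogonally diagonalizable with real eigenvalues by the spectral theorem. Similarity preserves both the characteristic polynomial and diagonalizability, so $AB$ inherits real eigenvalues and is itself diagonalizable, giving the first assertion.

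For the sign of the eigenvalues, I would invoke Sylvester's law of inertia. The same factorization $S = (A^{1/2})^\top B A^{1/2}$ with $A^{1/2}$ invertible shows that $S$ is \emph{congruent} to $B$, so $S$ and $B$ have the same inertia. If $B$ is positive definite, then $S$ is positive definite and its eigenvalues---which coincide with those of $AB$ by the similarity above---are strictly positive; if $B$ is merely positive semidefinite, the identical argument yields nonnegative eigenvalues.

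There is no genuine obstacle here: the whole argument rests on the existence of the symmetric square root of a positive definite matrix together with the two standard theorems (the spectral theorem and Sylvester's law of inertia). The only point requiring a moment's care is to keep the two reductions straight---using \emph{similarity} for the eigenvalue and diagonalizability claims, but \emph{congruence} for the inertia claim---since conflating them would be the natural misstep to avoid.
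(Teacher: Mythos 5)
Your proof is correct. The paper itself does not prove this lemma---it is quoted verbatim from Horn and Johnson (Corollary 7.6.2) as a known result---and your argument (similarity of $AB$ to the symmetric matrix $A^{1/2}BA^{1/2}$ for realness and diagonalizability, plus congruence of that matrix to $B$ via Sylvester's law of inertia for the sign claims) is precisely the standard textbook derivation underlying the cited corollary, so it matches the intended justification.
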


\begin{thm}\label{thm:df_application}
Suppose that $\gamma_i < 1/2$ for all $i$. Then the map $F$ given in \eqref{eq:map_F_DF} has a unique fixed point in $\wt{\Delta}_n$, and this fixed point is locally exponentially stable.
\end{thm}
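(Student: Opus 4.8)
The plan is to verify the hypotheses of Theorem~\ref{thm:main_lefschetz_result} for $F:\wt{\Delta}_n\to\wt{\Delta}_n$. The domain has already been arranged so that $\wt{\Delta}_n$ is compact, convex and oriented and $F$ is $\mathcal{C}^\infty$, so the only thing left to establish is that at \emph{every} fixed point the eigenvalues of the on-manifold Jacobian $dG_y$ of \eqref{eq:dGy_form} have magnitude strictly less than $1$. First I would evaluate the $n\times n$ ambient Jacobian at a fixed point $\bar x$ (necessarily an interior point of $\Delta_n$, since the corners $e_i$ were excluded). Setting $F_i=\bar x_i$ in \eqref{eq:J_diag}--\eqref{eq:J_offdiag} gives diagonal entries $\bar x_i$ and off-diagonals $-\bar x_i\bar x_j/(1-\bar x_j)$, which I would package as
\begin{equation}
J=(D-\bar x\bar x^\top)\,S,\qquad D=\mathrm{diag}(\bar x),\quad S=\mathrm{diag}\!\big(\tfrac{1}{1-\bar x_i}\big).
\end{equation}
Both factors are symmetric, $S\succ0$ (as $0\le\bar x_i\le1-\delta<1$), and $D-\bar x\bar x^\top\succeq0$ since $v^\top(D-\bar x\bar x^\top)v=\sum_i\bar x_iv_i^2-(\sum_i\bar x_iv_i)^2\ge0$ is the variance of a categorical law.

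Next I would relate the spectrum of $dG_y$ to that of $J$. Because $F$ maps into the simplex, the column sums of $J$ vanish, i.e. $\vect{1}_n^\top J=0$; hence the tangent space $\vect{1}_n^\perp$ is $J$-invariant and the matrix $dG_y$ in \eqref{eq:dGy_form} represents exactly the restriction $J|_{\vect{1}_n^\perp}$ in the $y$-coordinates, so $\mathrm{spec}(dG_y)\subseteq\mathrm{spec}(J)$ (the ambient $J$ merely carries one extra eigenvalue $0$). It therefore suffices to control $\mathrm{spec}(J)$. Applying Lemma~\ref{lem:AB_real} with the positive definite factor $S$ and the positive semidefinite factor $D-\bar x\bar x^\top$ shows that $J$ is diagonalizable with real, nonnegative eigenvalues, so the whole theorem reduces to the single bound $\lambda_{\max}(J)<1$.

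The crux is this upper bound, and it is where the hypothesis $\gamma_i<1/2$ enters — through the fixed-point equation, not through any naive bound on $\bar x$. Since $J$ is similar to the symmetric $S^{1/2}(D-\bar x\bar x^\top)S^{1/2}$, a Rayleigh-quotient computation (equivalently, a congruence by $S^{-1/2}$) turns $\lambda_{\max}(J)<1$ into
\begin{equation}\label{eq:crux}
\sum_i(2\bar x_i-1)\,w_i^2<\Big(\sum_i\bar x_iw_i\Big)^2\qquad\text{for all }w\neq0.
\end{equation}
It is tempting to argue $2\bar x_i-1<0$ termwise, but this is \emph{false}: a fixed point may have $\bar x_i>1/2$. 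Instead I would extract the correct consequence of the hypothesis. At $\bar x$ one has $\bar x_i(1-\bar x_i)=\alpha\gamma_i$ with $\alpha=(\sum_j\gamma_j/(1-\bar x_j))^{-1}$; summing over $i$ and using $\sum_i\bar x_i=\sum_i\gamma_i=1$ gives $\alpha=1-\|\bar x\|^2$, whence $\gamma_i<1/2$ is \emph{equivalent} to the strict inequality $2\bar x_i-1<2\bar x_i^2-\|\bar x\|^2$ for every $i$. Multiplying by $w_i^2\ge0$ and summing (strictly, since $w\neq0$) bounds the left side of \eqref{eq:crux} above by $2\sum_i\bar x_i^2w_i^2-\|\bar x\|^2\|w\|^2$, and the elementary sum-of-squares identity
\begin{equation}
\Big(\sum_i\bar x_iw_i\Big)^2-2\sum_i\bar x_i^2w_i^2+\|\bar x\|^2\|w\|^2=\sum_{i<j}(\bar x_iw_j+\bar x_jw_i)^2\ge0
\end{equation}
closes the remaining gap and yields \eqref{eq:crux}. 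Thus $\lambda_{\max}(J)<1$, every eigenvalue of $dG_y$ lies in $[0,1)$ at each fixed point, and Theorem~\ref{thm:main_lefschetz_result} delivers a unique, locally exponentially stable fixed point in $\wt{\Delta}_n$.

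The main obstacle is precisely \eqref{eq:crux}: the Jacobian is state dependent, so the bound cannot be obtained by a pointwise sign argument, and the proof instead hinges on converting $\gamma_i<1/2$ into the sharp inequality $2\bar x_i-1<2\bar x_i^2-\|\bar x\|^2$ via the fixed-point relation and then absorbing the remainder with the sum-of-squares identity. I expect the boundary case $\gamma_i=1/2$ (the star graph, already excluded) to be exactly where the strictness, and hence the local exponential stability, is lost.
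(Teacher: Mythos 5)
Your proof is correct, and it shares much of the paper's skeleton: the same evaluation of the Jacobian at a fixed point, the same symmetric factorization (your $D-\bar x\bar x^\top$ is exactly the paper's matrix $B=\frac{\partial F}{\partial x}A$, which the paper reads as the Laplacian of a weighted complete graph and you read as a categorical covariance), the same use of Lemma~\ref{lem:AB_real} to get real nonnegative eigenvalues, the same reduction from $\frac{\partial F}{\partial x}$ to $dG_y$ (the paper via the explicit similarity transform $T$ in \eqref{eq:dGy_eigs}, you via invariance of $\vect{1}_n^\perp$, which the paper itself offers only as a remark), and the same final appeal to Theorem~\ref{thm:main_lefschetz_result}. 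Where you genuinely diverge is at the crux bound $\lambda_{\max}<1$. The paper disposes of it with a trace argument: the eigenvalues of $\frac{\partial F}{\partial x}|_{\bar x}$ are real and nonnegative, exactly one is zero (because $\left[\frac{\partial F}{\partial x}\right]^\top$ is the Laplacian of a strongly connected graph), and they sum to $\mathrm{trace}\left(\frac{\partial F}{\partial x}|_{\bar x}\right)=\sum_i\bar x_i=1$; since $n\geq 3$, the $n-1\geq 2$ positive eigenvalues each fall strictly below $1$. Notably, the paper never invokes $\gamma_i<1/2$ in this interior bound --- that hypothesis is spent earlier, on the corner-instability lemma and the construction of $\wt{\Delta}_n$. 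You instead run a Rayleigh-quotient/congruence argument, converting $\gamma_i<1/2$ via the fixed-point identity $\alpha=1-\Vert\bar x\Vert^2$ into the pointwise inequality $2\bar x_i-1<2\bar x_i^2-\Vert\bar x\Vert^2$ and closing with a sum-of-squares identity; I checked the identity and the equivalence, and both are right. Each route buys something: the paper's trace argument is shorter and shows the interior eigenvalue bound holds independently of $\gamma$; yours needs neither the single-zero-eigenvalue (Laplacian rank) fact, nor the standing assumption $n\geq 3$ (encoded automatically, since $\gamma_i<1/2$ for all $i$ together with $\gamma^\top 1_n=1$ forces $n\geq 3$), nor the cited external result $\bar x_i>0$, and it pinpoints exactly where strictness degenerates at the star-graph boundary $\gamma_i=1/2$. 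Your warning that the termwise estimate $2\bar x_i-1<0$ fails is also well founded --- e.g. $\bar x=(0.6,0.2,0.2)$ is a fixed point for $\gamma\approx(0.43,0.29,0.29)$, all below $1/2$ --- and the paper's trace argument can be viewed as the other way around that same obstruction. One cosmetic quibble: your parenthetical that a fixed point in $\wt{\Delta}_n$ is ``necessarily interior since the corners were excluded'' is not a valid inference as stated (excluding corners does not force $\bar x_i>0$), but it is harmless here because interiority is never actually used in your argument, and in any case positivity follows at once from $\bar x_i=\alpha\gamma_i/(1-\bar x_i)>0$.
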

\begin{proof}
While we will need to use $dG_y$, for convenience we begin by studying certain properties of $\frac{\partial F}{\partial x}$, because it has certain properties which allow for easier delivery of specific conclusions in relation to $dG_y$. In summary, we will prove that  at any fixed point $\bar x \in \wt{\Delta}_n$, $\frac{\partial F}{\partial x}$ has a single eigenvalue at zero and all other eigenvalues are real, positive, and with magnitude less than one. We then show that the eigenvalues of $dG_y$ are the nonzero eigenvalues of $\frac{\partial F}{\partial x}$, which allows us to use Theorem~\ref{thm:main_lefschetz_result}.

Let us denote an arbitrary fixed point of $F$ as $\bar x$. Then clearly $F_i(\bar x) = \bar x_i$ for any $i$. Then it is straightforward to obtain that 
\begin{align}
\left.\frac{\partial F_i}{\partial x_i}\right|_{\bar x} & = \bar x_i \\
\left.\frac{\partial F_i}{\partial x_j}\right|_{\bar x} & = -\frac{\bar x_i \bar x_j}{1-\bar x_j}
\end{align}
In addition, it was shown in \cite{jia2015opinion_SIAM,ye2017DF_IFAC} that $\bar x_i > 0$ for all $i$. Since $\bar x \in \wt{\Delta}_n$, we immediately conclude that the diagonal entries of $\frac{\partial F}{\partial x}|_{\bar x}$ are strictly positive and the off-diagonal entries strictly negative. Moreover, it is straightforward to verify using \eqref{eq:J_diag} and \eqref{eq:J_offdiag} that the column sum of $\frac{\partial F}{\partial x}$ is equal to zero for every column. In other words, $\left[\frac{\partial F}{\partial x}\right]^\top$ is the Laplacian matrix associated with a strongly connected graph, which implies that $\frac{\partial F}{\partial x}$ has a single eigenvalue at zero and all other eigenvalues have positive real part \cite{ren2007distributed_book}. We now show that the other eigenvalues are strictly real and less than one in magnitude.

Define $A = \text{diag}[1- \bar x_i]$ as a diagonal matrix with the $i^{th}$ diagonal entry being $1- \bar x_i$. Since $\bar x_i \in \wt{\Delta}_n$, all diagonal entries of $A$ are strictly positive. The matrix $B = \frac{\partial F}{\partial x} A$ is symmetric, with diagonal entry $b_{ii} = \bar x_i(1-\bar x_i) > 0$ and off-diagonal entries $b_{ij} = -\bar x_i \bar x_j < 0$. Verify that, for any $i$, there holds 
\begin{align}
\sum_{j=1}^n b_{ij} & = \bar x_i(1 - \bar x_i) - \bar x_i\sum_{j=1, j\neq i}^n \bar x_j \\
& = \bar x_i (1 - \bar x_i - \sum_{j=1, j\neq i}^n \bar x_j) \\
& = 0
\end{align}
where the last equality was obtained by using the fact that $\bar x \in \wt{\Delta}_n \Leftrightarrow \sum_{j=1}^n \bar x_j = 1 \Leftrightarrow 1 - \bar x_i = \sum_{j=1, j\neq i}^n \bar x_j$. In other words, the row sum of $B$ is equal to zero for every row. It follows that $B$ is the Laplacian matrix of an undirected, complete graph; $B$ has a single eigenvalue at zero and all other eigenvalues are positive real \cite{ren2007distributed_book}. Using Lemma~\ref{lem:AB_real}, we thus conclude that $\frac{\partial F}{\partial x}|_{\bar x} = (A^{-1}B)^\top$ has all real eigenvalues (because $A^{-1}$ is positive definite and $B$ is positive semidefinite). Notice that $\text{trace}(\frac{\partial F}{\partial x}|_{\bar x}) = \sum_{i=1}^n \bar x_i = 1 = \sum_{j=1} \lambda_j(\frac{\partial F}{\partial x}|_{\bar x})$, where $\lambda_j$ is an eigenvalue of $\frac{\partial F}{\partial x}|_{\bar x}$. Since $n \geq 3$ and $\frac{\partial F}{\partial x}|_{\bar x}$ has only a single zero eigenvalue, it follows that all other eigenvalues of $\frac{\partial F}{\partial x}|_{\bar x}$ are strictly less than one (and real).

Define the matrix 
\begin{equation}
T = \begin{bmatrix} I_{n-1} & \vect{0}_{n-1} \\ -\vect{1}_{n-1}^\top & 1 \end{bmatrix}\,,\quad T^{-1} = \begin{bmatrix} I_{n-1} & \vect{0}_{n-1} \\ \vect{1}_{n-1}^\top & 1 \end{bmatrix}
\end{equation}
where $\vect{0}_{n-1}$ is the $n-1$ dimensional vector of all zeros. We established earlier that $\frac{\partial F}{\partial x}$ has column sum equal to zero, i.e. $\vect{1}^\top \frac{\partial F}{\partial x} = \vect{0}^\top$. Combining this column sum fact with \eqref{eq:dGy_form}, observe then, that
\begin{align}\label{eq:dGy_eigs}
\begin{bmatrix} dG_y & \frac{\partial F}{\partial x_n} \\ \vect{0}_{n-1}^\top & 0 \end{bmatrix} = T^{-1} \frac{\partial F}{\partial x} T
\end{align}
where $\frac{\partial F}{\partial x_n}$ is a column vector with $i^{th}$ element $\frac{\partial F_i}{\partial x_n}$. The similarity transform in \eqref{eq:dGy_eigs} tells us that the matrix on the left of \eqref{eq:dGy_eigs} has the same eigenvalues as $\frac{\partial F}{\partial x}$, and since the matrix is block triangular, it follows that $dG_y$ has the same nonzero eigenvalues as $\frac{\partial F}{\partial x}$.

Since we  assumed that $\bar x$ was an arbitrary fixed point  it follows that all eigenvalues of $\frac{\partial F}{\partial x}$ at any fixed point in $\wt{\Delta}_n$ are real and strictly less than one, which in turn implies that the eigenvalues of $dG_y$, at any fixed point $\bar y = [\bar x_1, \hdots, \bar x_{n-1}]^\top$, are inside the unit circle. By Theorem~\ref{thm:main_lefschetz_result}, $G$ has a unique fixed point $\bar y$ in $\wt{\Delta}_n$, and thus $F$ in \eqref{eq:map_F_DF} has a unique fixed point $\bar x$ in $\wt{\Delta}_n$. 
\end{proof}

\begin{remark}
The fact that the eigenvalues of $dG_y$ at a point in $\wt{\Delta}_n$ are a subset of those of $\frac{\partial F}{\partial x}$ is no surprise. Because $\wt{\Delta}_n$ is invariant under $F$, the translation of the affine space enclosing the set to define a linear space (including the origin) must have the property that this linear space is an invariant subspace for $\frac{\partial F}{\partial x}$. As such, linear algebra tells us that the eigenvalues of $\frac{\partial F}{\partial x}$ restricted to the invariant subspace are a subset of the full set of eigenvalues of $\frac{\partial F}{\partial x}$. We have chosen above to give a more ``explicit'' proof of the relation, in the process identifying the eigenvalue of $\frac{\partial F}{\partial x}$  that drops out in restricting to the invariant subspace.
\end{remark}

\begin{remark}
We note that it is straightforward to prove $\frac{\partial F}{\partial x}$, {\bf{for all}} $x\in \wt{\Delta}_n$, has strictly real eigenvalues with a single zero eigenvalue and all others positive. This property holds not only at the fixed point of $F$. However, via simulations, we have observed that the eigenvalues of $\frac{\partial F}{\partial x}$ can be greater than one (other than at a fixed point), and this may occur near the boundary of $\wt{\Delta}_n$. In \cite{ye2017DF_journal_arxiv}, the authors were therefore motivated to introduce a differential coordinate transform and show the transformed Jacobian had $1$-norm strictly less than one; nonlinear contraction analysis was then used to conclude exponential convergence to a unique fixed point $\bar x$. It is not always assured that such a transform exists; the one proposed in \cite{ye2017DF_journal_arxiv} and the proof of the norm upper bound were nontrivial and not intuitive. In this paper, we have greatly simplified the analysis by looking  at the Jacobian at only the fixed points of $F$ (which we initially assumed were not unique). However, the method of this paper guarantees only {\bf{local}} convergence, in the sense that although there can be only one fixed point, the existence of trajectories which are not convergent to a fixed point but rather for example converge to an orbit is not precluded.  Moreover, the technique of this paper cannot be easily extended to treat non-autonomous versions of \eqref{eq:nonlinearupdate}, which in this example application, occur when the social network topology is dynamic. For the nonautonomous case, the paper \cite{ye2017DF_journal_arxiv} uses the techniques of nonlinear contraction to conclude there is a \textbf{unique limiting trajectory $\bar x(k)$}, see \cite[Section IV]{ye2017DF_journal_arxiv} for details.
\end{remark}

\section{Concluding Remarks}\label{section:concludingremarks}
 This paper studies discrete-time nonlinear maps. We show that if the map operates in a compact, oriented manifold and the map itself is homotopically equivalent to the identity map (a condition satisfied if the manifold is convex) then evaluation of the Jacobian at the fixed points of the map can yield substantial results. Specifically, if the Jacobian has eigenvalues strictly inside the unit disk at all fixed points, then the map has a unique fixed point, and the fixed point is locally exponentially stable. This result is proved using Lefschetz fixed point theory. We then apply this result to a recent problem in social network analysis, simplifying existing proofs.  Future work will focus on unifying the Lefschetz approach by obtaining a similar result with a proof using Morse theory; preliminary results are encouraging. In addition, we will seek to determine whether any additional properties of $F$ would be needed to conclude a global convergence result.

\section*{Acknowledgements}
The work of M. Ye and B.D.O. Anderson was supported by the Australian Research Council (ARC) under grants \mbox{DP-130103610} and \mbox{DP-160104500}, and by Data61@CSIRO (formerly NICTA Ltd.). M. Ye was supported by an Australian Government Research Training Program (RTP) Scholarship. The authors would like to thank Jochen Trumpf for his discussion on Lefschetz and Morse theory.

\bibliographystyle{IEEEtran}
\bibliography{MYE_ANU,fixedpoint}
%

%
%
%




\end{document}